
\typeout{IJCAI-13 Instructions for Authors}


\documentclass{article}
\usepackage{ijcai13}
\usepackage{times}
\usepackage{amsmath}
\usepackage{amssymb}
\usepackage{amsthm}
\usepackage{extarrows}
\usepackage{arcs}
\usepackage{bbm}
\usepackage{subfigure}
\usepackage{epsfig}
\usepackage{url}
\usepackage{graphics}





\pdfinfo{
/Title (A Game-theoretic Machine Learning Approach for Revenue Maximization in Sponsored Search)
/Author (Di He, Wei Chen, Liwei Wang, Tie-Yan Liu)
/Keywords (Mechanism design, Machine learning)}
\begin{document}

\title{A Game-theoretic Machine Learning Approach for Revenue Maximization in Sponsored Search}
\author{Di He$^\dag$, Wei Chen$^\ddag$, Liwei Wang$^\dag$ \and Tie-Yan Liu$^\ddag$ \\
$^\dag$Key Laboratory of Machine Perception, MOE, School of Electronics Engineering and Computer Science,\\
Peking University, Beijing, P.R.China\\
$^\ddag$Microsoft Research Asia, Building 2, No. 5 Danling Street, Beijing, P.R.China \\
$^\dag$\{hedi,wanglw\}@cis.pku.edu.cn, $^\ddag$\{wche,tyliu\}@microsoft.com}

\maketitle
\newtheorem{definition}{Definition}
\newtheorem{theorem}{Theorem}
\newtheorem{lemma}{Lemma}
\newtheorem{corollary}{Corollary}
\newtheorem{proposition}{Proposition}
\newtheorem{conjecture}{Conjecture}
\begin{abstract}
Sponsored search is an important monetization channel for search engines, in which an auction mechanism is used to select the ads shown to users and determine the prices charged from advertisers. There have been several pieces of work in the literature that investigate how to design an auction mechanism in order to optimize the revenue of the search engine. However, due to some unrealistic assumptions used, the practical values of these studies are not very clear. In this paper, we propose a novel \emph{game-theoretic machine learning} approach, which naturally combines machine learning and game theory, and learns the auction mechanism using a bilevel optimization framework. In particular, we first learn a Markov model from historical data to describe how advertisers change their bids in response to an auction mechanism, and then for any given auction mechanism, we use the learnt model to predict its corresponding future bid sequences. Next we learn the auction mechanism through empirical revenue maximization on the predicted bid sequences. We show that the empirical revenue will converge when the prediction period approaches infinity, and a Genetic Programming algorithm can effectively optimize this empirical revenue. Our experiments indicate that the proposed approach is able to produce a much more effective auction mechanism than several baselines.
\end{abstract}

\section{Introduction}

Sponsored search is an important means of Internet monetization, and is the driving force of major search engines today. In sponsored search, keyword auction is used to determine the ranking and pricing of the ads, and thus affects the revenue of the search engine. The research on auction mechanism design has attracted the attention of many researchers from the areas of artificial intelligence and electronic commerce \cite{broder2007semantic,pclick,GSP}.

In keyword auctions, each advertiser is required to submit a bid for his/her ad. When the keyword is issued by a web user, the search engine will rank the ads, show the top-ranked ones to the user, and then charge the corresponding advertisers in certain conditions (e.g., if their ads are clicked). Generalized second price (GSP) auction is a family of auction mechanisms that has been popularly used by today's search engines, which ranks ads according to the products of their bid prices and quality scores; and charges a clicked ad by the minimum bid price to maintain its current rank position.

Given the critical role of the auction mechanism in sponsored search, many people have studied the optimization of its performance in terms of search engine revenue \cite{lahaie2007revenue,garg2007design,zhu2009optimizing,radlinski2008optimizing,zhu2009revenue}. These studies can be categorized into two groups. The first group \cite{lahaie2007revenue,garg2007design} addresses the problem from a game-theoretic perspective. Some works in this group optimize the worse-case revenue in symmetric Nash equilibria; some others consider Bayesian optimal mechanism design with the knowledge of bidders' value distribution. They usually require strong (and somewhat unrealistic) assumptions on the game, such as the full information about the values/bids of the bidders (or their distributions) and the full rationality of the bidders. However, empirical evidence has shown that these assumptions might not hold in practice, especially considering that sponsored search auctions are complex, fast-paced, and involve a very large number of bidders\cite{pin2011stochastic,duong2011discrete,edelman2007strategic}. In such a real situation, advertisers will have limited information access and bounded rationality, therefore it is inappropriate to perform classical game-theoretic analysis based on the aforementioned assumptions. The second group \cite{zhu2009optimizing,radlinski2008optimizing,zhu2009revenue} tackles the task from a machine learning perspective, in particular, directly adopts conventional machine learning methods to optimize the revenue on historical bidding data. The underlying assumption is that the bid distribution will not change over time thus the optimal auction mechanism learnt from historical data can generalize to future test data. However, as we know, there is a so-called \emph{second-order effect} in sponsored search, i.e., many advertisers will adjust the bids according to their utilities, as the responses to a new auction mechanism. As a result, the ``optimal'' auction mechanism directly learnt from historical bidding data will no longer be optimal after it is deployed since advertisers will change their bids in response to it.

To overcome the above drawbacks, we propose a novel approach, which can naturally combine game theory and machine learning using a bilevel optimization framework, so as to simultaneously avoid the strong assumptions and handle the second-order effect. For ease of reference, we call the approach a \emph{game-theoretic machine learning} approach. Generally speaking, in the approach, we first learn an advertiser behavior model from historical data to describe how advertisers change their bids in response to a given auction mechanism. We then predict advertiser's bid changes using this model during the process of auction mechanism learning.

Specifically, when building the advertiser behavior model, we consider that the advertisers do not have a very good knowledge about the detailed auction mechanism or the bids of other advertisers. Instead, they adjust their bids mainly based on the previous bids of their own and the signals provided by the search engine. We assume the time-homogeneous Markov property for this dependency, and the signals mainly include the number of impressions, the number of clicks, and the average cost per click (which are referred to as Key Performance Index, or KPI for short). We learn the parametric transition matrix of the Markov model by a maximum likelihood method. For the next step, the learnt advertiser behavior model is used to predict the future bid sequences of the advertisers after a new auction mechanism is deployed. Then we estimate the revenue of the search engine over the predicted mechanism-dependent bids in a finite time period (based on the historical logs), and call this quantity the \emph{empirical revenue} under the advertiser behavior model. We prove that the empirical revenue of an arbitrary auction mechanism will converge when the length of the sequence approaches infinity, and then apply Genetic Programming to search for the optimal auction mechanism in terms of this empirical revenue.

We have conducted preliminary experiments to test the proposed approach. The experiments show very promising results: our learnt auction mechanism can outperform several baselines including the widely used classic GSP auction.

\section{Game-theoretic Machine Learning Approach}

In this section, we introduce our proposed approach for revenue maximization in sponsored search, which we call a \emph{game-theoretic machine learning} approach\footnote{In the literature of online learning, there is a branch called \emph{game-theoretic learning} or \emph{adaptive mechanism design} \cite{pardoe2006adaptive}. The main difference between our work and theirs is that we learn a mechanism based on historical data, while they optimize the performance of the mechanism using an online scheme.}.

\subsection{Preliminaries}
According to \cite{caragiannis2012efficiency,pclick,GSP,varian2007position}, sponsored search auctions can be mathematically formulated as follows. Assume there are $m$ ads from $m$ advertisers; each advertiser $i$ has a value per click $v_i$ on his/her ad; and each ad $i$ is represented by a feature vector $x_i\in\mathcal{R}^d$.

Before the auction starts, the search engine receives a bid vector $\mathbf{b}=(b_1, \cdots ,b_i, \cdots ,b_m)$ from the advertisers where the $i$-th component $b_i$ is the bid of the $i$-th advertiser. When the keyword is issued by a web user $u$, the search engine will rank the ads according to the products of their quality scores $f(x_i)$ and bid prices $b_i$, and show the top-ranked ads to the user. As a common practice, the quality score $f(x_i)$ is defined as a compound function $f(x_i)=g(\widehat{CTR} (x_i))$, where $\widehat{CTR}(x_i)$ predicts the click probability of ad $i$ at the top-1 position, and $g(.)$ is a monotone function. In the literature, different $g(.)$ functions have been used. For example, $g(t)=1$ was used by Yahoo! in early 2000s; later $g(t)=t$ has been used by several search engines; and recently $g(t)=t^{\alpha}$ has been considered \cite{lahaie2011efficient,lahaie2007revenue}.

If ad $i$ is placed at position $j$ and is clicked by the user, advertiser $i$ will be charged by the following amount of money according to the Generalized Second Price rule,
{\small\begin{eqnarray}
Pay_{f,\mathbf{b}}(i)&=&\frac{f(x_{\sigma_{f,\mathbf{b}}(j+1)})b_{\sigma_{f,\mathbf{b}}(j+1)}}{f(x_{i})}\nonumber
\end{eqnarray}}
where $\sigma_{f,\mathbf{b}}(j)$ is the index of the ad ranked at position $j$ according to  $f(x)b$.

The corresponding utility function for advertiser $i$ can be written as follows,
{\small\begin{eqnarray}
Uti_{f,\mathbf{b}}(i)&=&[v_i-Pay_{f,\mathbf{b}}(i)]\mathbbm{1}_{\{\text{user } u \text{ clicks on ad } i\}}\nonumber
\end{eqnarray}}
where $\mathbbm{1}_{\{.\}}$ is the indicator function.

The revenue of the search engine can be obtained as below,
{\small\begin{eqnarray}
&&r(\mathbf{b},f,u) =\nonumber\\
&&\sum_iPay_{f,\mathbf{b}}(\sigma_{f,\mathbf{b}}(i))\mathbbm{1}_{\{\text{user } u \text{ clicks on the ad at position } i\}}
\end{eqnarray}}
Considering that in GSP, both the ranking and pricing rules are determined once the quality score $f$ is given, we will also refer to $f$ as the auction mechanism if there is no confusion in the context.

\subsection{Game Theory vs. Machine Learning}

Given an auction mechanism space $\mathcal{F}$, one of the goals of the search engine is to design an auction mechanism $f^*\in\mathcal{F}$ to maximize its revenue. In the literature of game theory, there have been some attempts along this direction. For example, in \cite{lahaie2007revenue} the worst-case revenue in the symmetric Nash equilibria is maximized, and in \cite{garg2009optimal,garg2007design} the Bayesian optimal auction mechanism design is investigated with the value distribution of the bidders as public knowledge. In these works, some ideal assumptions have been employed. For instance, one usually assumes that the values/bids (or their distributions) of the advertisers as well as the auction mechanism of the search engine are publicly accessible; however, in reality, an advertiser can only see the information about his/her own ad (e.g., the bids and KPI associated with the ad in a certain period of time) but have no access to the information of other advertisers or the search engine. For another instance, it is usually assumed that advertisers are rational enough to take the best responses so as to maximize their utilities; however, in reality, advertisers have very different and diverse bidding behaviors: some advertisers are very active while some others seldom change their bids; some are highly capable of placing near-optimal bids while some others are not. As a result, the practical values of the aforementioned game-theoretic attempts are not very clear.

In recent years, some machine learning researchers have tried to optimize the empirical revenue of an auction mechanism based on historical bidding data, in order to avoid the use of unrealistic game-theoretic assumptions. For example, in \cite{zhu2009optimizing,zhu2009revenue},the authors propose to simultaneously optimize the revenue and relevance of the auction mechanism on historical bidding data. However, these works actually introduce another kind of assumption, which is about the i.i.d. distribution of bidding data. This is a very foundational assumption used by statistical machine learning, which guarantees that the model learnt from the training set can generalize to the future test set. Unfortunately, this assumption does not hold either in the context of sponsored search. As mentioned in the introduction, the so-called second-order effect makes the optimal auction mechanism on the historical bids no longer optimal since advertisers will change their bids in response to the auction mechanism in the future.

\subsection{Bilevel Optimization Framework}

We propose a bilevel optimization framework to address the aforementioned problems with previous works. The framework contains two levels of optimization, which models the advertiser's responses in the inner level and optimizes the revenue of the search engine in the outer level. The framework naturally combines game theory and machine learning, and therefore we call it a \emph{game-theoretic machine learning} approach.

The proposed framework can be intuitively depicted using Figure 1, and can also be mathematically characterized as follows.
{\small\begin{eqnarray}
\max_{f\in\mathcal{F}}R(f,g,S)\\
s.t. \min_{g\in\mathcal{G}} L(g,S)
\end{eqnarray}}
where $S$ represents the historical data; $\mathcal{G}$ is the space of advertiser behavior models; $L(g,S)$ is the loss function to facilitate the learning of the behavior model; $\mathcal{F}$ is the space of the auction mechanisms; and $R(f,g,S)$ is the empirical revenue defined based on the historical data, the advertiser behavior model, and the auction mechanism $f$.

As can be seen from the figure, two kinds of training data are used: advertisers' auction logs and web users' query and click-through logs. The former record the historical bids of the advertisers in a period of time; the latter contain the keywords issued by users and their click behaviors in the same period. As indicated by the formulas, the inner level of our framework is to learn a mechanism-dependent advertiser behavior model from historical auction logs, which is used to predict advertisers' bid changes in the future. Since the advertiser behavior model is mechanism-dependent, we can use the model to predict advertisers' future bid sequences based on the users' query and click-through data with respect to different auction mechanisms. Then the outer level is to learn the optimal auction mechanism in terms of the empirical revenue on the predicted bid sequences.

Here we would like to emphasize that it is actually very fundamental to model sponsored search using bilevel optimization (and we regard this as one of our contributions). The bilevel optimization framework cannot only characterize our proposed approach, but also contain many previous works as its special cases. For example, the framework can cover the worst-case optimal mechanism design \cite{lahaie2007revenue} if the inner level optimization problem characterizes a set of equilibrium conditions on the bids.

\begin{figure}
\centering
\includegraphics[width=0.38\textwidth]{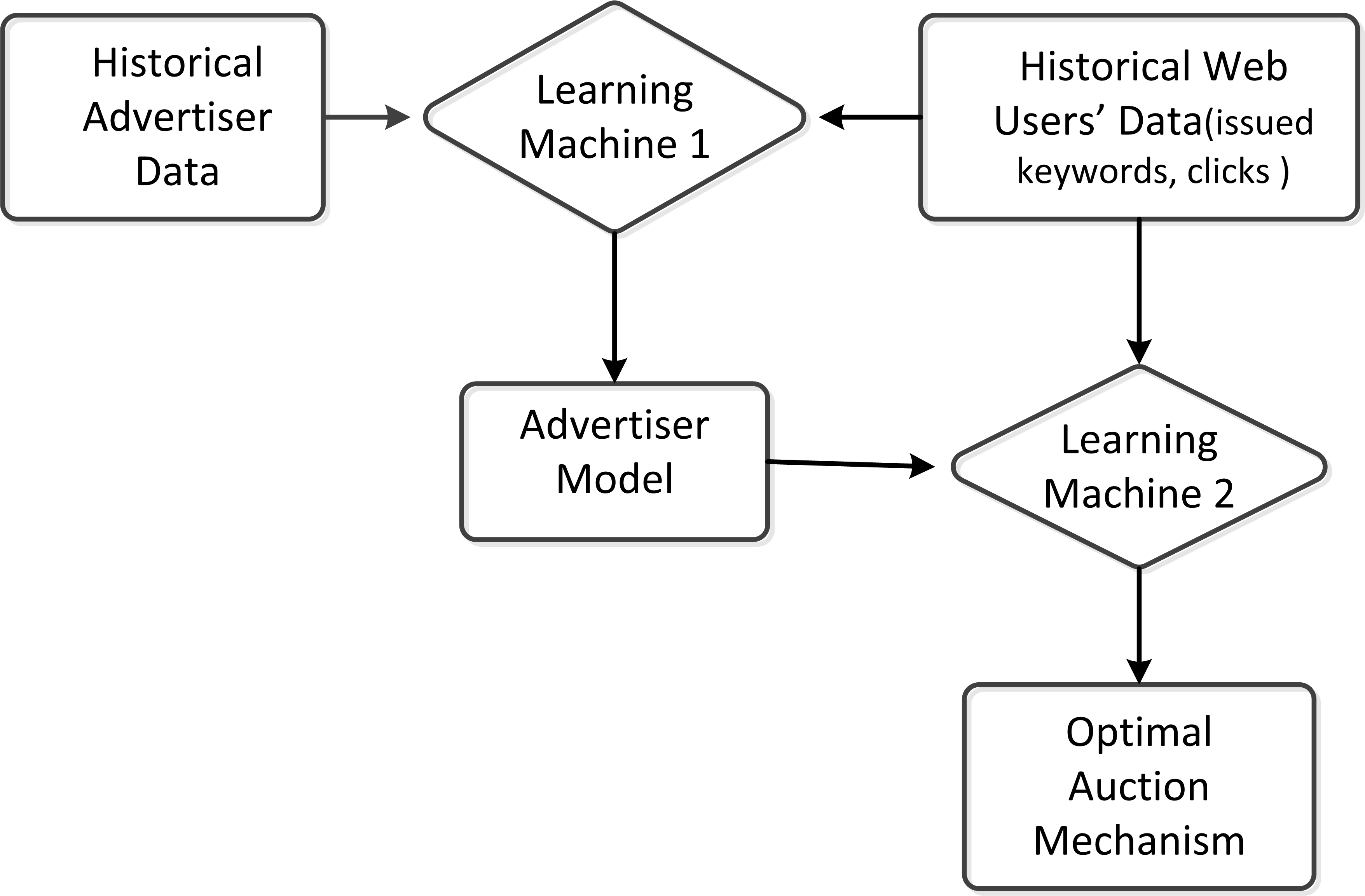}
\caption{Game-Theoretic Machine Learning Framework for Auction Design}\label{fig:framework}
\end{figure}

To realize the proposed framework, first of all, we need to build a reasonable model space $\mathcal{G}$ to describe the advertisers' bidding behaviors. Second, we need to ensure that the revenue computed on the predicted bids by using this model can converge when the prediction period approaches infinity. Only in this way, the subsequent revenue optimization process will make sense. We will make detailed introduction of how we deal with these difficulties in the next sections.

\section{Markov Advertiser Model}


In this section, we show that a Markov process can be used to model how advertisers change their bids. Please note that one can also employ other (more complicated) models in our proposed bilevel optimization framework.

Intuitively, if an advertiser finds that his/her KPI in the last period is below expectation, it is very likely that he/she will increase the bid price in hope to get a better KPI (e.g., ranked in higher positions or receive more clicks). On the other hand, if an advertiser feels that the numbers of impressions and clicks are satisfactory but the cost per click is higher than expected, he/she may lower down the bid price. The Markov property indicates that advertisers' bid changes only depend on their bids in a fixed number of previous time periods. For simplicity here we only consider the time-homogenous first-order Markov property (i.e., the next bid is independent of pervious bids if the current bid is given, and the transition matrix is time-invariant).

To be more specific, we denote the bid vector at time $t$ as $\mathbf{b}^t=(b^t_1, \cdots ,b^t_i, \cdots ,b^t_m)$, in which the $i$-th component $b^t_i$ is the bid of the $i$-th advertiser at time $t$. We assume the bid prices that advertisers can assign come from a finite space $\mathcal{B}$. This assumption is reasonable considering that there is a minimum bid unit in real sponsored search systems and the bid price is always bounded.

Suppose in the $t$-th time period, a stream of users $U^t$ issue the keyword and observe a number of ads. For each user $u\in U^t$, if $u$ clicks on an ad, the corresponding advertiser will pay a certain amount of money to the search engine according to the pricing rule of the auction. At the beginning of the $(t+1)$-th time period, advertiser $i$ will receive a report about the KPI of his/her ad in the $t$-th period. We use $kpi^t_{i}$ to denote this report, which can be considered as the output of a function $KPI_i(\mathbf{b^t}, U^t,f)$. Based on $ kpi^t_{i}$, advertiser $i$ may change his/her bid price to $b^{t+1}_{i}$ in order to be better off. That is,
{\small\begin{eqnarray}
P(b^{t+1}_{i}|\mathbf{b}^{t}, \cdots ,\mathbf{b}^{t},U^1, \cdots ,U^t,f)=P(b^{t+1}_{i}| b_{i}^t, kpi_{i}^t)
\end{eqnarray}}
Denote $M_{i,kpi}$ as the transition matrix for advertiser $i$ with $kpi$.
{\small\begin{equation}
M_{i,kpi}=
\begin{bmatrix}
 \cdots& \cdots& \cdots   \\
 \cdots& p_{i,kpi}(j,k)  & \cdots \\
 \cdots& \cdots& \cdots   \\
\end{bmatrix}
\end{equation}}
where the element $p_{i,kpi}(j,k)$ is the probability of advertiser $i$ changing his/her bid price from $j$ to $k$ given $kpi$, i.e.,
{\small\begin{equation}
p_{i,kpi}(j,k)=P(b^{t+1}_{i}=k|b_{i}^t=j, kpi)
\end{equation}}
Considering that $\mathbf{kpi}^t=( kpi_1^t ,  \cdots ,kpi_m^t)$ is a function of $\mathbf{b^t}$, $U^t$ and auction mechanism $f$, and given $\mathbf{kpi}^t$, advertisers independently change their bids according to the Markov processes, the probability of advertisers' bid vector changing from $\mathbf{b}$ to $\mathbf{b'}$ can be written as,
{\small\begin{eqnarray}
P(\mathbf{b^{t+1}}=\mathbf{b'} | \mathbf{b^{t}}=\mathbf{b}, U^t, f)=\prod_{i} p_{i,KPI_i(\mathbf{b}, U^t,f )}(b_i,b'_i)
\end{eqnarray}}
It is clear that $\mathbf{b^t}$ can also be considered as a Markov process and the transition matrix $Q_{U^t, f}$ is a function of auction mechanism $f$ and user stream $U^t$. We refer to the matrix $Q_{U,f}$ as the \emph{advertiser behavior model}.

For each advertiser $i$ we observe $\{(b_i^t,kpi_i^t,b_i^{t+1})\},t=1, \cdots ,T$ in the historical auction logs and the corresponding transition matrices $M_{i,kpi}$ can be learnt from these observations.  Here we consider two approaches to learn these transition probabilities. The first is a non-parametric approach based on statistical estimation. To be concrete, to estimate $M_{i,KPI}(B_1,B_2)$, i.e. the probability of advertiser $i$ switching his/her bid value from $B_1$ to $B_2$ once his kpi value is $KPI$, one use the term $\hat{M}_{i,KPI}(B_1,B_2)=\frac{\sum_{t=1}^T\mathbbm{1}_{\{kpi_i^t=KPI,b_i^t=B_1,b_i^{t+1}=B_2\}}}{\sum_{t=1}^T\mathbbm{1}_{\{kpi_i^t=KPI,b_i^t=B_1\}}}$. That is, we use empirical frequency to estimate expectation.  Another approach takes a parametric form. In particular, we assume for any $t$, the probability of $b_i^{t+1}$ is proportional to a truncated Gaussian function $p(b)=e^{-(b-\mu)^2}, b \geq 0$, in which the value of $\mu$ depends on $(b_i^t,kpi_i^t)$. We use a linear function to model this dependency: for any advertiser $i$, denote $z_i^t=(b_i^t,kpi_i^t,1)$, then we have

{\small\begin{equation}
\mu = <w, z_i^t>
\end{equation}}
where $w$ can be learnt by maximizing the log likelihood of the observed data using gradient descent(which corresponds to the inner level optimization).
{\small\begin{equation}
w^* = argmin_w (<w , z_i^t> - b^{t+1}_i)^2
\end{equation}}

By using either of the two approaches, the advertiser behavior model can be constructed with the learnt $M_{i,kpi}$, and we can use the model to predict the future bid prices given any auction mechanism. In our experiments, we conducted both non-parametric estimation approach and parametric learning approach, and found that the latter is both more efficient and accurate than the former, thus we will focus on the parametric learning approach in the remaining part of the paper. Once the behavior model is built, tThen the search engine revenue can be computed and optimized on the predicted mechanism-dependent bid prices. Details of this process will be introduced in the next section.

\section{Empirical Revenue Maximization}
In this section, we discuss how to optimize search engine revenue over the bids predicted by the Markov advertiser behavior model. One may have noticed that with the Markov model, the bid prices of the advertisers will always change according to the transition matrix. Then a natural question is whether the empirical revenue on such dynamically changing bid prices will converge when the time period approaches infinity. In the following subsections, we will first give a proof for the convergence, and then propose a Genetic Programming method to optimize the empirical revenue.

\subsection{Convergence Analysis}
First, we formally define the empirical revenue for a given period of time.
\begin{definition}
Given the bid vector $\mathbf{b^t}$ and user stream $U^t$ at each time $t=1, \cdots ,T$, the empirical revenue for the search engine with auction mechanism $f$ is defined as follows,
{\small\begin{equation}
R(\mathbf{b^1}, \cdots ,\mathbf{b^T},U^1, \cdots ,U^T,f)=\frac{1}{T}\sum^T_{t=1}\sum_{u\in U^t}r(\mathbf{b^t},f,u)
\end{equation}}
\end{definition}
Next we show that under certain conditions the empirical revenue will converge in probability. The proof has two components: first we prove the existence of the expectation of the empirical revenue (we call it expected revenue), and then we prove the variance of empirical revenue approaches zero when the time period approaches infinity.
\begin{lemma}
If user stream $U^t$ is stochastically i.i.d. sampled and the expectation of the Markov transition matrix with respect to user stream $\mathbb{E}_UQ_{U,f}$ is positive for auction mechanism $f$, $\lim_{T\to\infty}\mathbb{E}R(b^1, \cdots ,b^T,U^1, \cdots ,U^T,f)$ exists.
\end{lemma}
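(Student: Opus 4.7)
The plan is to reduce the time-averaged expected revenue to a Cesaro average of a sequence of expectations under the marginal law of the bid chain, and then invoke the ergodic theorem for finite-state Markov chains. First, I would fix the auction mechanism $f$, note that the bid space $\mathcal{B}^m$ is finite, and let $\bar{r}(\mathbf{b}) := \mathbb{E}_{U}\bigl[\sum_{u\in U} r(\mathbf{b},f,u)\bigr]$, which is a bounded function on $\mathcal{B}^m$ because revenue per user is bounded (bids come from the finite set $\mathcal{B}$) and $\mathbb{E}|U|$ is finite (for sponsored search this is a mild assumption; if needed one can also bound $|U^t|$ almost surely).

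Second, I would establish that when the i.i.d. user streams $U^1, U^2, \ldots$ are integrated out, the marginal distribution $\mu_t$ of $\mathbf{b}^t$ evolves as a homogeneous Markov chain on $\mathcal{B}^m$ with transition kernel $\bar{Q} := \mathbb{E}_U Q_{U,f}$. Concretely, using that $U^t$ is independent of $\mathbf{b}^t$ (since $\mathbf{b}^t$ is a function of $\mathbf{b}^1, U^1, \ldots, U^{t-1}$ and randomness in the bidders' transitions), one has
\begin{equation}
\mu_{t+1}(\mathbf{b}') = \sum_{\mathbf{b}} \mu_t(\mathbf{b})\, \mathbb{E}_{U^t}\bigl[Q_{U^t,f}(\mathbf{b},\mathbf{b}')\bigr] = \sum_{\mathbf{b}} \mu_t(\mathbf{b})\, \bar{Q}(\mathbf{b},\mathbf{b}'),
\end{equation}
and by the same independence $\mathbb{E}\bigl[\sum_{u\in U^t} r(\mathbf{b}^t,f,u)\bigr] = \mathbb{E}_{\mathbf{b}^t\sim\mu_t}\bar{r}(\mathbf{b}^t)$.

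Third, I would use the hypothesis that $\bar{Q}$ is entrywise positive. This immediately yields that $\bar{Q}$ is irreducible and aperiodic on the finite state space $\mathcal{B}^m$, so Perron--Frobenius / the basic finite-state ergodic theorem gives a unique stationary distribution $\pi$ and geometric convergence $\|\mu_t - \pi\|_{\mathrm{TV}} \le C \rho^{t}$ for some $\rho \in (0,1)$. Since $\bar{r}$ is bounded, $\mathbb{E}_{\mu_t}\bar{r} \to \mathbb{E}_{\pi}\bar{r}$ as $t \to \infty$.

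Finally, combining these pieces,
\begin{equation}
\mathbb{E}\,R(\mathbf{b}^1,\ldots,\mathbf{b}^T,U^1,\ldots,U^T,f) = \frac{1}{T}\sum_{t=1}^{T}\mathbb{E}_{\mu_t}\bar{r},
\end{equation}
and by Cesaro's lemma this Cesaro average converges to $\mathbb{E}_{\pi}\bar{r}$, establishing existence of the limit. The main obstacle is the second step: carefully justifying that one may replace the joint process by the marginal chain with kernel $\bar{Q}$, i.e.\ verifying the conditional independence that lets $\mathbb{E}_U Q_{U,f}$ act as a genuine transition kernel on the $\mathbf{b}^t$-marginals. Once this is in place, the convergence is a standard consequence of the Perron--Frobenius theorem for positive matrices on finite state spaces.
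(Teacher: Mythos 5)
Your proposal is correct and follows essentially the same route as the paper: integrate out the i.i.d.\ user streams so that the bid vector becomes a homogeneous Markov chain with transition matrix $\mathbb{E}_U Q_{U,f}$, use positivity of that matrix to obtain convergence to the stationary distribution, and conclude that the time-averaged expected revenue converges to the stationary expectation of the per-period revenue. Your write-up merely makes explicit some steps the paper leaves implicit (boundedness of $\bar{r}$, the Ces\`aro averaging, and the conditional-independence argument justifying the marginal kernel), which is a useful but not substantively different elaboration.
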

\begin{proof}
Considering that $U^t$ is independent of $\mathbf{b}^t$, the expected revenue can be formulated as follows,
{\small\begin{eqnarray*}
&&\mathbb{E}R(\mathbf{b^1}, \cdots ,\mathbf{b^T},U^1, \cdots ,U^T,f)\\
&=&\frac{1}{T}\sum_{t=1}^T\sum_{\mathbf{b}\in\{\mathcal{B}\}^{M}}[\mathbb{E}_{U_1, \cdots ,U^{t-1}}P(\mathbf{b^t}=b)\times \mathbb{E}\sum_{u\in U}r(\mathbf{b},f,u)]
\end{eqnarray*}}
It is clear that by taking expectation over $U^1, \cdots ,U^{t-1}$, the bid vector $\mathbf{b}$ follows a Markov process with transition matrix $Q_f = \mathbb{E}_UQ_{U,f}$. Considering that $\mathbb{E}_UQ_{U,f}$ is positive, the Markov process will lead to its stationary distribution $\pi_{Q_f}$, therefore we have,
{\small\begin{eqnarray*}
&&\lim_{T\to\infty} \mathbb{E}R(\mathbf{b^1}, \cdots ,\mathbf{b^T},U^1, \cdots ,U^T,f)\\
&=&\sum_{\mathbf{b}\in\{\mathcal{B}\}^{m}}[\pi_{Q_f}(\mathbf{b})\times\mathbb{E}_U\sum_{u\in U}r(\mathbf{b},f,u)]
\end{eqnarray*}}
where $\pi_{Q_f}(\mathbf{b})$ is the probability of the state $\mathbf{b}$ in the stationary distribution.
\end{proof}

\begin{lemma}
If the user stream $U^t$ is stochastically i.i.d. sampled, and the expected Markov transition matrix $\mathbb{E}_UQ_{U,f}$ is positive for auction mechanism $f$, then $\lim_{T\to\infty}VarR(\mathbf{b}^1, \cdots ,\mathbf{b}^T,U^1, \cdots ,U^T,f) = 0$ .
\end{lemma}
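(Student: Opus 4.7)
The plan is to expand $\text{Var}(R)$ into a double sum of covariances between the per-period revenues and use the geometric mixing of the bid chain to show all but $O(T)$ of those terms decay quickly. Writing $X_t := \sum_{u\in U^t} r(\mathbf{b}^t, f, u)$ so that $R = \frac{1}{T}\sum_{t=1}^T X_t$, I decompose
$$\text{Var}(R) = \frac{1}{T^2}\sum_{t=1}^T \text{Var}(X_t) + \frac{2}{T^2}\sum_{1 \le s < t \le T} \text{Cov}(X_s, X_t).$$
Since $\mathcal{B}$ is finite, per-click payments are uniformly bounded; under a mild moment condition on the i.i.d.\ user stream (e.g.\ $\mathbb{E}|U^t|^2 < \infty$), each $\text{Var}(X_t)$ is bounded by a constant independent of $t$, so the diagonal contribution is already $O(1/T)$.

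The substantive step is bounding the off-diagonal covariances using the spectral gap of $Q_f := \mathbb{E}_U Q_{U,f}$. Because $Q_f$ is a positive stochastic matrix on the finite state space $\mathcal{B}^m$, Perron--Frobenius yields constants $\rho \in (0,1)$ and $C > 0$ with $\|Q_f^k(\mathbf{b},\cdot) - \pi_{Q_f}\|_1 \le C\rho^k$ uniformly in the starting state $\mathbf{b}$. Setting $g(\mathbf{b}) := \mathbb{E}_U[\sum_{u\in U} r(\mathbf{b},f,u)]$ and conditioning on $\mathbf{b}^{s+1}$, the Markov property gives $\mathbb{E}[X_t \mid \mathbf{b}^{s+1}] = \sum_{\mathbf{b}'} Q_f^{t-s-1}(\mathbf{b}^{s+1}, \mathbf{b}')\, g(\mathbf{b}')$, and the mixing bound together with boundedness of $g$ yields $|\mathbb{E}[X_t \mid \mathbf{b}^{s+1}] - \mathbb{E}[X_t]| \le C'\rho^{t-s-1}$. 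Writing $\text{Cov}(X_s, X_t) = \mathbb{E}[X_s\, (\mathbb{E}[X_t \mid \mathbf{b}^{s+1}] - \mathbb{E}[X_t])]$ and applying Cauchy--Schwarz with the boundedness of $X_s$ then gives $|\text{Cov}(X_s, X_t)| \le C''\rho^{t-s}$. Summing over $s<t$ produces $\sum_{s<t}|\text{Cov}(X_s, X_t)| \le T\sum_{k\ge 1}C''\rho^k = O(T)$, so the off-diagonal contribution to $\text{Var}(R)$ is also $O(1/T)\to 0$, which combined with the diagonal estimate proves the claim.

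The main obstacle is the subtle coupling that $U^s$ drives both $X_s$ (through the payment computation) and the transition $\mathbf{b}^s \to \mathbf{b}^{s+1}$ (via the KPI signal), so $X_s$ is not independent of the future trajectory even after conditioning on $\mathbf{b}^s$. The fix is to condition on $\mathbf{b}^{s+1}$ rather than on $\mathbf{b}^s$: once $\mathbf{b}^{s+1}$ is fixed, the remaining randomness driving $X_t$ consists of further iterations of $Q_f$ together with an independent $U^t$, both of which are independent of the $U^s$ that produced $X_s$, allowing the spectral gap bound to be applied uniformly. A secondary concern is that $\mathbf{b}^1$ need not be drawn from $\pi_{Q_f}$, but since $\mathbb{E}[X_t]$ itself converges to $\mathbb{E}_{\pi_{Q_f}}[g]$ at geometric rate by the same mixing bound (as used implicitly in Lemma 1), all constants can be taken uniform in $t$ and no stationary initialization is needed.
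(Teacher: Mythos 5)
Your proof is correct, and in fact it supplies an argument the paper itself omits: the paper's ``proof'' of this lemma is the single sentence that it is similar to Lemma~1 and is skipped for space, whereas the variance statement genuinely requires more than Lemma~1's computation of the limiting expectation --- namely, control of the off-diagonal covariances, which is exactly what you provide. Your decomposition of $\mathrm{Var}(R)$ into diagonal and covariance terms, the use of Perron--Frobenius/Doeblin for the positive matrix $Q_f=\mathbb{E}_U Q_{U,f}$ on the finite state space $\mathcal{B}^m$ to get uniform geometric mixing, and the resulting $O(1/T)$ bound are all sound. Your observation that one must condition on $\mathbf{b}^{s+1}$ rather than $\mathbf{b}^{s}$ --- because $U^s$ simultaneously determines the revenue $X_s$ and (through the KPI) the transition $\mathbf{b}^s\to\mathbf{b}^{s+1}$ --- is a genuinely important point that a careless ``similar to Lemma~1'' argument would miss; after conditioning on $\mathbf{b}^{s+1}$ the residual randomness in $X_t$ is driven only by $U^{s+1},\dots,U^t$, which are independent of $(\mathbf{b}^s,U^s)$, so the identity $\mathrm{Cov}(X_s,X_t)=\mathbb{E}\bigl[X_s\bigl(\mathbb{E}[X_t\mid \mathbf{b}^{s+1}]-\mathbb{E}[X_t]\bigr)\bigr]$ is valid. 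Two minor points of hygiene: the phrase ``boundedness of $X_s$'' is imprecise, since $X_s$ is a sum over a random number of users and is only controlled through the moment condition $\mathbb{E}|U^t|^2<\infty$ that you (rightly) flag --- the paper implicitly needs the same assumption for the variance even to be finite --- and it would be cleaner to bound the covariance by $\mathbb{E}|X_s|\cdot\sup_{\mathbf{b}}|\mathbb{E}[X_t\mid\mathbf{b}^{s+1}=\mathbf{b}]-\mathbb{E}[X_t]|$ directly rather than invoking Cauchy--Schwarz. Neither affects the validity of the argument.
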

The proof of Lemma 2 is similar to that of Lemma 1.We omit it due to space restrictions. By jointly considering the two lemmas, it is easy to obtain the following theorem.
\begin{theorem}
For $\forall\epsilon>0$, there exists $\delta(\epsilon,T)$ that converges to zero as $T$ approaches infinity, satisfying
{\small\begin{eqnarray}
&&P(|R(\mathbf{b^1}, \cdots ,\mathbf{b^T},U^1, \cdots ,U^T,f)\\
&-&\mathbb{E}R(\mathbf{b^1}, \cdots ,\mathbf{b^T},U^1, \cdots ,U^T,f)|>\epsilon)<\delta(\epsilon,T).\nonumber
\end{eqnarray}}
\end{theorem}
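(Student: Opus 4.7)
The plan is to derive the theorem as a direct corollary of Lemma 2 by applying Chebyshev's inequality. Since Lemma 1 guarantees the existence (and finiteness) of $\mathbb{E}R$ in the limit, and Lemma 2 guarantees that $\mathrm{Var}\,R \to 0$ as $T \to \infty$, the standard route from $L^2$-convergence to convergence in probability will close the theorem.

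Concretely, I would first record that the empirical revenue $R(\mathbf{b^1},\ldots,\mathbf{b^T},U^1,\ldots,U^T,f)$ is a real random variable with finite second moment for every finite $T$ (bounded bids in $\mathcal{B}$ and bounded user streams ensure $r(\mathbf{b},f,u)$ is bounded, so $R$ is bounded and the expectation and variance are both well-defined). Then I would invoke Chebyshev's inequality: for any $\epsilon > 0$,
\begin{equation*}
P\bigl(|R - \mathbb{E}R| > \epsilon\bigr) \;\leq\; \frac{\mathrm{Var}\,R(\mathbf{b^1},\ldots,\mathbf{b^T},U^1,\ldots,U^T,f)}{\epsilon^2}.
\end{equation*}
Setting $\delta(\epsilon,T) := \mathrm{Var}\,R(\mathbf{b^1},\ldots,\mathbf{b^T},U^1,\ldots,U^T,f)/\epsilon^2$ gives a quantity that (i) bounds the probability on the left, and (ii) by Lemma 2 tends to zero as $T \to \infty$ whenever the conditions on $U^t$ and $\mathbb{E}_U Q_{U,f}$ of Lemma 2 hold. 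This delivers exactly the conclusion of Theorem 1.

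There is essentially no obstacle in this step itself, since it is a two-line application of Chebyshev; the real work was already carried out in Lemmas 1 and 2, where the stationary-distribution argument for the positive expected transition matrix $\mathbb{E}_U Q_{U,f}$ is used to control the first and second moments of the time-averaged revenue. If I wanted to be careful, I would just verify that Lemma 2 is stated under exactly the same hypotheses needed here (i.i.d.\ user streams and positivity of $\mathbb{E}_U Q_{U,f}$), which it is, and that the boundedness of $r$ guarantees the Chebyshev bound is nontrivial (finite variance). Thus the proof reduces to: cite Lemma 1 for existence of the limiting expectation, cite Lemma 2 for vanishing variance, and apply Chebyshev to conclude convergence in probability with the explicit rate $\delta(\epsilon,T)$.
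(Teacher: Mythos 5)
Your proposal is correct and matches the paper's intended argument: the paper simply states that the theorem follows ``by jointly considering the two lemmas,'' which is precisely the Chebyshev step you spell out, taking $\delta(\epsilon,T)=\mathrm{Var}\,R/\epsilon^2$ and invoking Lemma 2 to send it to zero. You have in fact made the paper's omitted step explicit, but there is no substantive difference in approach.
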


\subsection{Optimization Algorithm}
According to the theoretical guarantee given in the previous subsection, the optimal auction mechanism $f^*\in\mathcal{F}$ can be learnt based on the converged value of the empirical revenue. In this section, we introduce our proposed algorithm to fulfill this task. The algorithm has three components, advertiser behavior learning, empirical revenue simulation, and auction mechanism learning. We call it a bilevel optimization algorithm, or BOA for short. For ease of reference, we summarize our algorithm in Table 1.

Assume that we have advertisers' auction logs and users' query and click-through logs in $T$ time periods for training. The first step is to learn the parametric transition matrices $M^*_{i,kpi}$ based on this training data and construct $Q^*_{U,f}$. As shown in Section 3, this step can be achieved by means of maximum likelihood estimation. Then given the learnt advertiser behavior model, it is a natural idea to select the optimal auction mechanism according to the following expected revenue.
{\small\begin{eqnarray*}
\sum_{\mathbf{b}\in\{\mathcal{B}\}^{m}}[\pi_{Q_f}(\mathbf{b})\times \mathbb{E}_U\sum_{u\in U}r(\mathbf{b},f,u)]
\end{eqnarray*}}
Considering that the empirical revenue will converge to the expected revenue, it suffices to optimize the empirical revenue on a sufficiently long period of time. To achieve this goal, we leverage the learnt advertiser behavior model to predict the bids for another $N$ time periods (please note that this prediction is a part of our learning algorithm, and $N$ can be much larger than $T$). The prediction is performed in the following manner. With an initial bid profile $\mathbf{b}^1$, for any time $1\leq t\leq N$, we randomly sample a piece of user data $\hat{U}^t$ from the training set. Then we are able to rank the ads according to $b^t$ and auction mechanism $f$, check the clicks of $U^t$ on the shown ads based on the historical users click-through logs, and compute $kpi^t_i$ for each advertiser $i$. Based on $kpi_i^t$, we sample advertiser $i$'s next bid $b_i^{t+1}$ according to the learnt transition matrices $M^*_{i,kpi}$. Therefore the empirical revenue of mechanism $f$ over these $N$ time periods can be computed. As shown in Eqn(1), the empirical revenue is a complex function of $f$ due to the second-price formula. We therefore employ the Genetic Programming method for its optimization. Genetic Programming is an evolutionary algorithm developed in the field of artificial intelligence, which can handle complex, non-linear functional relationships.

Here we introduce a method named \emph{$\delta$-sampling technique} to improve the efficiency of the empirical revenue simulation process. Note that to sample the advertisers' bid for each mechanism is infeasible since the mechanism class may be infinite. Therefore, for a mechanism $a$ and a predefined non-negative constant $\delta$, if there exists a mechanism $a^\prime$ satisfying $d(a,a^\prime)\leq\delta$, where $d(\cdot,\cdot)$ is a distance measure between two mechanisms, and the bidding data for $a^\prime$ has been sampled before, we use the empirical bidding data under $a^\prime$ as an surrogate for bidding data under $a$, thus avoiding the sampling process under $a$. By the $\delta$-sampling technique, we greatly improve the efficiency of the proposed optimization algorithm.
\begin{table}[t]
  \caption{The Bilevel Optimization Algorithm}\label{tab:semicrf}
  \centering
{\small
\begin{tabular}{l}
  \hline\hline
  \textbf{Bilevel Optimization Algorithm (BOA)}:\\
  \hline
  \textit{Advertiser Behavior Learning (ABL):}\\
Input : Advertisers' bids $\mathbf{b}^t, t=1, \cdots ,T$.\\
  \quad\quad\quad Advertiser's KPI reports $\mathbf{kpi} ^t, t=1, \cdots ,T$.\\
  1.  Train the advertiser behavior model by maximizing the\\
  \quad log likelihood function in Equ (9).\\
Output : The parametric transition matrix $M^*_{i,kpi}$\\
  \hline
  \textit{Empirical Revenue Simulation (ERS)}\\
Input : The transition matrices $M^*_{i,kpi}$, auction mechanism $f$.\\
  \quad\quad\quad User streams $U^t, t=1, \cdots ,T$\\
  \quad\quad\quad Advertisers' initial bids $\textbf{b}^1$, sampling length $N$\\
\quad\quad\quad A predefined non-negative constant $\delta$\\
  1. For each $t$, uniformly sampling $\hat{U}^t$ from $\{U^t, t=1, \cdots ,T\}$\\
  2. At each time $1\leq t\leq N$,  if there exists a mechanism $f^\prime$ satisfying\\
   \quad $d(f^\prime,f)\leq\delta$, and the bidding data sampling process for $f^\prime$ has been \\
   \quad done before, then use the simulated bidding data under $f^\prime$ as the bidding\\
   \quad data under $f$, otherwise use $\hat{U}^t$, auction mechanism $f$,$\ M^*_{i,kpi}$, and $\mathbf{b}^{t}$ to \\
   \quad predict$\mathbf{b}^{t+1}$.\\
  3. Compute the empirical revenue over the predicted bid sequence. \\
  \textbf{Output}: The empirical revenue of mechanism $f$.\\
  \hline
  \textit{Auction Mechanism Learning (AML)}\\
Input : The transition matrices $M^*_{i,kpi}$.\\
\quad\quad\quad Auction mechanism space $\mathcal{F}$\\
  \quad\quad\quad User streams $U^t, t=1, \cdots ,T$\\
  1. Initialize $K$ random auction mechanisms $f_1, \cdots f_K$.\\
  2. For each auction mechanism $f$ do \\
     \quad 2.1. Compute each $f$'s revenue according to the ERS step. \\
     \quad 2.2. Use $f$'s empirical revenue as the fitness function.\\
  3. New $K$ mechanisms are created by applying genetic operators.\\
  \textbf{Output}: The best mechanism $f^*$.\\\hline\hline
 \end{tabular}}
\end{table}

\section{Experiments}

In this section, we report our experimental results on the effectiveness of our proposed approach. For simplicity and without loss of generality, we assume that the quality score has the mathematical form of $f(x) = (\widehat{CTR}(x))^{\alpha}$. The task of auction mechanism learning then reduces to finding the best parameter $\alpha$. This setting has also been used in some other works, such as \cite{lahaie2007revenue}.

\subsection{Experimental Settings}
Since it is impractical to get the online responses of the advertisers and web users to any learnt auction mechanism, we choose to use a simulation-based experiment. Simulation-based experiments have been widely used in the field of online advertising for algorithm evaluation \cite{li2010exploitation,jordan2010designing}. Generally speaking, we first collect a dataset containing the queries and clicks of web users as well as the bids of advertisers in a period of time. Considering that the clicks and bids in the dataset were generated when a previous auction mechanism was used, we could not use them to evaluate new auction mechanisms due to the second order effect. Therefore, we remove all such information in the dataset and simulate advertisers' bid changes and users' clicks with respect to an underlying auction mechanism using some state-of-the-art behavior models.

Specifically, we use a log data of 300 days obtained from a commercial search engine. We randomly sample ten keywords from the data that contains 1053 advertisers' bid history, and collect all the information about the advertisers and users that have bidden or queried these keywords in the log data. We use the first 100 days to estimate some basic parameters in our experiments (e.g., the CTR and the valuation of each ad) and to construct the training data with the \textit{simulation-based method} by assuming the auction mechanism to be standard GSP ($\alpha=1$ in the quality score).  We then use the remaining 200 days as the test set to compare different mechanisms, and the simulated bids on the test set will be different when different mechanisms are evaluated.

We follow the method described in \cite{richardson2007predicting} to estimate $\widehat{CTR}(x_i)$. We also compute a position discount factor $\beta_j$ by aggregating all the click information. We then simulate users' click behaviors by assuming ad $i$ to have a probability of $\widehat{CTR}(x_i)\beta_j$ to be clicked if it is ranked at position $j$. When estimating the valuation of an ad, we simply use its largest bid in the original dataset.\footnote{Please note that we take this approach for sake of simplicity. One can choose more complicated methods \cite{pin2011stochastic} to estimate the values.} Then we use advertisers' bids in the first day as their initial bids and simulate the bid changes using a mixture model, which is based on three advertiser models proposed in the literature.

1) Best Response Model(BRM) \cite{GSP}. It assumes that each advertiser knows exactly the bids of all his/her competitors in any round of auction. On this basis, he/she will take the best-response bid strategy.

2) Analytical model(AM) \cite{pin2011stochastic}. The model does not assume full information about the bids of advertisers, instead assumes the distribution of the number of advertisers and the distribution of the bid prices of advertisers to be available as public knowledge. On this basis, each advertiser takes the strategy that maximizes his/her expected utility.

3) Stable Behavior Model(SBM). As discussed in the introduction, many advertisers rarely change their bids. This model assumes that advertisers are lazy and never change their bid prices even if the auction mechanism is changed.

In the mixture model, we uniformly sample the coefficients $p1,p2,p3$ from a multinomial distribution, and then select $p_1$ fraction of advertisers to behave according to BRM, $p_2$ fraction of advertisers to behave according to AM, and the rest to behave according to SBM.

We implement three baselines to compare with our proposed approach.

1) The GSP model: the classical GSP auction mechanism with $\alpha=1$ in the quality score $f$.

2) The Worst-Case Analysis (WCA) model: a state-of-the-art optimal mechanism design method \cite{lahaie2007revenue}which studies the revenue in symmetric Nash equilibria and selects the auction mechanism by worst case analysis.

3) The Directly-Learnt Auction (DLA) model: the optimal auction mechanism is directly learnt based on historical log data without considering the second order effect.

When estimating the Markov transition matrix, we run the gradient decent algorithm for 500 iterations. Our proposed BOA algorithm leverages this Markov model and computes the empirical revenue with $N=1000$. When using the Genetic Programming algorithm for empirical revenue optimization, we set the number of individuals in each generation as 10, the number of generations as 50, and crossover/mutation/reproduction rates as 70\%/20\%/10\% respectively. To make our experiments robust, we sample the coefficients in the mixture model for 100 times, and report the average performance in the experimental evaluation.

In the $\delta$-sampling, we set the distance as $d(\alpha_1,\alpha_2)=|\alpha_1-\alpha_2|$ and $\delta$ is set to be 0.01.

\begin{figure}
\centering
\includegraphics[width=0.4\textwidth]{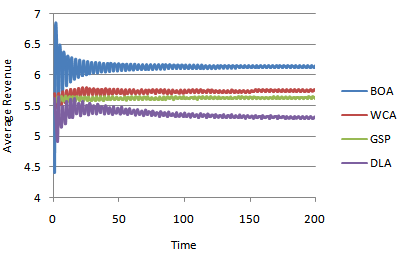}
\caption{Revenue Comparisons}\label{fig:exp}
\end{figure}
\subsection{Experimental Results}
Figure 2 shows the performances of the auction mechanism learnt by our proposed BOA approach as well as the baseline mechanisms. The x-axis corresponds to the length of simulated test data, whose unit is day; and the y-axis corresponds to the average revenue, which is normalized to [0,10] to avoid the disclosure of sensitive business information.

We have the following observations from the figure:

1)	When the size of the test data grows, the average performances of different methods become stable and the comparisons between their performances become clear.

2) The performances of BOA and WCA are better than GSP (with relative improvements of 8.9\% and 2.2\% respectively), which indicates that both our proposed approach and the game-theoretic approach for revenue optimization are effective. Furthermore, BOA is statistically significantly better than WCA and GSP with a p-value=$0.05$, while WCA does not pass the significance test against GSP. This implies that our approach is more effective than previous approaches.

3)	The performance of DLA is worse than GSP (with a relative decrease of 4.8\%), and the result is statistically significant with a p-value=$0.05$. This demonstrates the huge impact of the second-order effect on the experimental results. Due to the strong response from the advertisers to the auction mechanism, simply adopting classical machine learning methods cannot lead to an effective auction mechanism.

From the experiments, we can clearly see the effectiveness of our proposed algorithm and verify our theoretical analysis.

\section{Conclusion and Future Work}

In this paper, we have proposed a \emph{game-theoretic machine learning} approach to deal with the revenue optimization in sponsored search auctions. Specifically, we have proposed a Markov model to describe how advertisers change their bids, and then use the model to learn the auction mechanism that optimizes search engine's revenue on the predicted bids. The experimental results demonstrate the effectiveness of our proposal. As for the future work, we plan to consider other factors in our learning process, e.g.,the reserved price. We also plan to investigate on more comprehensive advertiser behavior models.
\section*{Acknowledgments}
This work is supported by NSFC(61222307, 61075003) and NCET-12-0015, and the work was done when the first author was
visiting Microsoft Research Asia.

\bibliographystyle{named}
\bibliography{ijcai13}

\end{document}